\theoremstyle{plain}
\newtheorem{theorem}{Theorem}
\newtheorem{lemma}{Lemma}
\theoremstyle{definition}
\newcommand{\Tri}{\textsc{FindTriangleInSubnetwork}}
\newcommand{\Ee}{\mathcal{E}}
\newcommand{\ceil}[1]{\left\lceil #1 \right\rceil}
\newcommand{\floor}[1]{\left\lfloor #1 \right\rfloor}
\providecommand{\Aa}{\mathcal{A}}
\providecommand{\Gin}{G^{\textrm{in}}}
\providecommand{\Vin}{V^{\textrm{in}}}
\providecommand{\Vout}{V^{\textrm{out}}}
\providecommand{\Ein}{E^{\textrm{in}}}
\providecommand{\Einter}{E^{\textrm{inter}}}
\providecommand{\degin}{\textrm{deg}_{\Ein}}
\providecommand{\degout}{\textrm{deg}_{\Eout}}
\providecommand{\Eout}{E^{\textrm{out}}}
\providecommand{\Enew}{E^{\textrm{new}}}
\newcommand{\poly}{\mathrm{poly}}
\newcommand{\mix}{\mathrm{mix}}
\newcommand{\diam}{\mathrm{diam}}
\newcommand{\mym}{\bar{m}}
\newcommand{\myn}{\bar{n}}
\providecommand{\Nn}{\mathcal{N}}
\providecommand{\Vv}{\mathcal{V}}
\providecommand{\Ii}{\mathcal{I}}
\providecommand{\Cc}{\mathcal{C}}
\newcommand{\Hline}[1]{\noalign{\hrule height #1}}
\begin{document}

\title{Quantum Distributed Algorithm for Triangle Finding in the CONGEST Model}
\author{Taisuke Izumi\\
Graduate School of Engineering\\
Nagoya Institute of Technology\\
\url{t-izumi@nitech.ac.jp}
\and
Fran{\c c}ois Le Gall\\
Graduate School of Mathematics\\
Nagoya University\\
\url{legall@math.nagoya-u.ac.jp}
\and
Fr\'ed\'eric Magniez\\
Universit\'e de Paris, IRIF, CNRS\\
\url{frederic.magniez@irif.fr}
}
\date{}

\maketitle
\thispagestyle{empty}
\setcounter{page}{0}
\begin{abstract}
This paper considers the triangle finding problem in the CONGEST model of distributed computing. Recent works by Izumi and Le Gall (PODC'17), Chang, Pettie and Zhang  (SODA'19) and Chang and Saranurak (PODC'19) have successively reduced the classical round complexity of triangle finding (as well as triangle listing) from the trivial upper bound $O(n)$ to $\tilde O(n^{1/3})$, where~$n$ denotes the number of vertices in the graph. In this paper we present a quantum distributed algorithm that solves the triangle finding problem in $\tilde O(n^{1/4})$ rounds in the CONGEST model. This gives another example of quantum algorithm beating the best known classical algorithms in distributed computing. Our result also exhibits an interesting phenomenon: while in the classical setting the best known upper bounds for the triangle finding and listing problems are identical, in the quantum setting the round complexities of these two problems are now $\tilde O(n^{1/4})$ and $\tilde \Theta(n^{1/3})$, respectively. Our result thus shows that triangle finding is easier than triangle listing in the quantum CONGEST model.
\end{abstract}
\newpage

\section{Introduction}\label{section:intro}
{\bf Background.}
The problem of detecting triangles in graphs has recently become the target of intensive research by the distributed computing community \cite{Abboud+17, CKKLPS15, Chang+SODA19, Chang+PODC19, Dolev+DISC12, Drucker+PODC14, Izumi+PODC17,Pandurangan+SPAA18}. This problem comes in two main variants: the \emph{triangle finding problem} and the \emph{triangle listing problem}. Given as input a graph $G=(V,E)$, the triangle finding problem asks to decide\footnote{Another version of the triangle finding problem asks to output one triangle of $G$ (or report that the graph has no triangle). It is easy to see that the two versions are essentially equivalent: a triangle can be found by applying $O(\log |V|)$ times an algorithm solving the decision version.}  whether the graph contains a triangle (i.e., three vertices $u,v,w\in V$ such that $\{u,v\}, \{u,w\},\{v,w\}\in E$), while the triangle listing problem asks to output all the triangles of $G$. A solution to the latter version, obviously, gives a solution to the former version. Besides its theoretical interest, another motivation for considering these problems is that for several graph problems faster distributed algorithms are known over triangle-free graphs (e.g., \cite{Hirvonen+17,Pettie+15}). The ability to efficiently check whether the network is triangle-free (or detect which part of the network is triangle-free) is essential when considering such algorithms. 

One of the main models to study graph-theoretic problems in distributed computing is the CONGEST model. In this model the graph $G = (V, E)$ represents the topology of the network, the computation proceeds with round-based synchrony and each vertex can send one $O(\log n)$-bit message to each adjacent vertex per round, where $n$ denotes the number of vertices. Initially, each vertex knows only the local topology of the network, i.e., the set of edges incident to itself. The triangle finding and listing problems ask, respectively, to decide if $G$ contains a triangle and to list all triangles of~$G$. The trivial strategy is for each vertex to send the list of all its neighbors to each neighbor (all the triangles can then be listed locally, i.e., without further communication). Since each list can contain up to $n$ vertices, this requires $O(n)$ rounds in the CONGEST model.

In 2017, Izumi and Le Gall \cite{Izumi+PODC17} gave the first nontrivial distributed algorithms for triangle detection in the CONGEST model: they constructed a $\tilde O(n^{2/3})$-round randomized algorithm\footnote{In this paper the notations $\tilde O(\cdot)$ and $\tilde \Omega (\cdot)$ remove $\poly(\log(n))$ factors.} for triangle finding and a $\tilde O(n^{3/4})$-round randomized algorithm for triangle listing. This was soon improved by Chang, Pettie and Zhang~\cite{Chang+SODA19}, who obtained a $\tilde O(\sqrt{n})$-round randomized algorithm for both the triangle finding and listing problems. The key idea leading to this improvement was to decompose the graph into components with low mixing time, and then apply recent routing techniques \cite{Ghaffari+PODC17,Ghaffari+DISC18} that make possible to achieve efficient routing in graphs with low mixing time. The complexity of the resulting distributed algorithm was dominated by the cost required to compute the graph decomposition. Very recently, Chang and Saranurak~\cite{Chang+PODC19} developed a much more efficient method to decompose the graph into components with low mixing time, which immediately leads to $\tilde O(n^{1/3})$-round randomized algorithms for the triangle finding and listing problems. Since a matching lower bound $\tilde \Omega(n^{1/3})$ is known for triangle listing \cite{Pandurangan+SPAA18}, the randomized round complexity of the triangle listing problem is thus now settled, up to possible polylogarithmic factors.\footnote{An interesting open problem, however, is to determine the deterministic round complexity of these problems. To our knowledge, no deterministic algorithm with sublinear round complexity is known in the CONGEST model.} For the triangle finding problem, on the other hand, essentially no nontrivial lower bound is known. Two exceptions are, first, the very weak lower bound obtained by Abboud, Censor-Hillel, Khoury and Lenzen \cite{Abboud+17} in the CONGEST model and, second, a lower bound obtained by Drucker, Kuhn and Oshman~\cite{Drucker+PODC14} for the much weaker CONGEST-BROADCAST model (where at each round the vertices can only broadcast a single common message to all other vertices) under a conjecture in computational complexity theory. This leads to the following intriguing question: is triangle finding easier than triangle listing?

Another related, but much stronger, model is the CONGEST-CLIQUE model. In this model at each round messages can even be sent between non-adjacent vertices, which makes bandwidth management significantly easier. In the CONGEST-CLIQUE model, Dolev, Lenzen and Peled~\cite{Dolev+DISC12} first showed that the triangle listing problem (and thus the triangle finding problem as well) can be solved deterministically in $\tilde O(n^{1/3})$ rounds for general graphs, which is tight since
the lower bound $\tilde \Omega(n^{1/3})$ by Pandurangan, Robinson and Scquizzato~\cite{Pandurangan+SPAA18} mentioned above holds in this model as well. Note that this lower bound also means that triangle listing in the CONGEST-CLIQUE model is not easier than triangle listing in the CONGEST model, at least as far as randomized algorithms are concerned. For the triangle finding problem, on the other hand, the better upper bound $O(n^{0.1572})$ has been obtained by Censor-Hillel et al.~\cite{CKKLPS15} by implementing fast matrix multiplication algorithms in the distributed setting. The CONGEST-CLIQUE model is thus a setting in which triangle finding is easier than triangle listing.

Table \ref{table:comparison} summarizes the best known bounds on the round complexity of triangle finding and listing discussed so far. \vspace{2mm}

\setlength{\extrarowheight}{1.5pt}
\begin{table*}[t]
 \begin{center}
  \begin{tabular}{lllll}
  \Hline{1.5pt}
   Model& Setting  & Problem &Complexity& Paper  \\ \Hline{1.5pt}
   CONGEST-CLIQUE & Classical & Listing&$\tilde O(n^{1/3})$& Dolev et al.~\cite{Dolev+DISC12}\bigstrut\\ \hline
   CONGEST-CLIQUE & Classical & Finding &$O(n^{0.1572})$&Censor-Hillel et al.~\cite{CKKLPS15}\bigstrut\\ \hline
   CONGEST & Classical & Listing & $\tilde O(n^{1/3})$ &Chang and Saranurak~\cite{Chang+PODC19} \bigstrut\\ \hline
   CONGEST & Quantum & Finding &$\tilde O(n^{1/4})$ & This paper \bigstrut\\  \Hline{1.5pt}

   CONGEST-BROADCAST& Classical& Finding &$\Omega\big(\frac{n}{e^{\sqrt{\log n}}\log n}\big)$&Drucker et al.~\cite{Drucker+PODC14} \bigstrut\\ \hline
   CONGEST-CLIQUE& Quantum& Listing &$\Omega\big(\frac{n^{1/3}}{\log n}\big)$&Pandurangan et al.~\cite{Pandurangan+SPAA18} \bigstrut\\ \Hline{1.5pt}
  \end{tabular}
  
   \caption{Prior results on the round complexity of distributed triangle finding and listing, and our new result. Here $n$ denotes the number of vertices of the graph. Note that any upper bound for the listing problem (in particular, the upper bound from \cite{Chang+PODC19}) holds for the finding problem as well. Similarly, note that any lower bound for the quantum CONGEST-CLIQUE model (in particular, the lower bound from \cite{Pandurangan+SPAA18}) holds for the weaker classical and quantum CONGEST models as well.}
  \label{table:comparison} \vspace{-7mm}
 \end{center}
\end{table*}

\noindent{\bf Quantum distributed computing.}
Quantum versions of the main models studied in distributed computing can be easily defined by allowing quantum information, i.e., quantum bits (qubits), to be sent through the edges of the network instead of classical information, i.e., bits. In particular, in the quantum version of the CONGEST model, which we will simply call the ``quantum CONGEST model'' below, each vertex can send one message of $O(\log n)$ qubits  to each adjacent vertex per round. While a seminal work by 
Elkin et al.~\cite{Elkin+PODC14} showed that for many important graph-theoretical problems the quantum CONGEST model is not more powerful than the classical CONGEST model, Le Gall and Magniez \cite{LeGall+PODC18} recently showed that one problem can be solved more efficiently in the quantum setting: computing the diameter of the network. More precisely, they constructed a $\tilde{O}(\sqrt{nD})$-round quantum algorithm for the exact computation of the diameter of the network (here~$D$ denotes the diameter), while it is known that any classical algorithm in the CONGEST model requires~$\tilde{\Omega}(n)$ rounds, even for graphs with constant diameter~\cite{FHW12}. In the CONGEST-CLIQUE model as well, a quantum algorithm faster than the best known classical algorithms has been obtained recently for the All-Pair Shortest Path problem~\cite{Izumi+PODC19}.  In the LOCAL model, which is another fundamental model in distributed computing, separations between the computational powers of the classical and quantum versions have also been reported \cite{Gavoille+DISC09,LeGall+STACS19}.

When discussing the classical randomized complexity of triangle listing in the CONGEST and CONGEST-CLIQUE models, we mentioned the $\tilde \Omega(n^{1/3})$-round lower bound by Pandurangan, Robinson and Scquizzato \cite{Pandurangan+SPAA18}. This lower bound is based on an information-theoretic argument that actually holds even in the quantum setting. In view of the recent matching upper bound in the classical setting \cite{Chang+PODC19}, we can conclude that for triangle listing the quantum CONGEST model is not more powerful than the classical CONGEST model. An intriguing question is whether quantum communication can help solving faster the triangle finding problem in the CONGEST model. In particular, can we break the $\tilde O(n^{1/3})$ barrier for triangle finding in the quantum setting?\vspace{2mm}

\noindent{\bf Our result.}
In this paper we break this barrier. Our main result is the following theorem.
\begin{theorem}\label{th:main}
In the quantum CONGEST model, the triangle finding problem can be solved with probability at least $1-1/\poly(n)$ in $\tilde O(n^{1/4})$ rounds, where $n$ denotes the number of vertices in the network. 
\end{theorem}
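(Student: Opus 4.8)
\medskip
\noindent\textbf{Proof strategy (proposal).}
The plan is to reduce, via an expander decomposition of the network, to subgraphs with polylogarithmic mixing time (where efficient routing is available), and then to detect a triangle inside such a subgraph by a nested quantum distributed search which, since we only need to decide existence rather than enumerate, beats the classical $\tilde O(n^{1/3})$ approach.

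\smallskip
\noindent\emph{Step 1 (reduction to low-mixing clusters).} First I would apply the distributed expander decomposition of Chang and Saranurak~\cite{Chang+PODC19} to partition the edges of $G$ into clusters, each inducing a subgraph of mixing time $\poly(\log n)$, together with a ``deleted'' set of at most an $\epsilon$-fraction of the edges; iterating $\poly(\log n)$ times places every edge inside some cluster, at a cost that does not dominate $\tilde O(n^{1/4})$. Every triangle of $G$ is then either entirely inside one cluster, or uses at least one inter-cluster or deleted edge; the latter case is handled by an auxiliary routine exploiting that such edges are comparatively few, in the spirit of~\cite{Chang+PODC19} but adapted to detection. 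So it suffices to detect a triangle in a graph $H$ of mixing time $\poly(\log n)$: there, by the routing results of~\cite{Ghaffari+PODC17,Ghaffari+DISC18}, any load-balanced routing of $\tilde O(|V(H)|)$ messages per vertex finishes in $\poly(\log n)$ rounds, and any two vertices can exchange $\poly(\log n)$-bit messages in $\poly(\log n)$ rounds.

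\smallskip
\noindent\emph{Step 2 (quantum search inside a cluster).} Inside a cluster $H$ on $n'\le n$ vertices (with the few high-degree vertices handled separately, since their incident triangles are easier to detect), partition $V(H)$ into $s$ groups of equal size and search for a triangle $\{u,v,w\}$ by a two-level quantum search: an outer amplitude amplification over the $O(s^2)$ choices of the groups containing $u$ and $v$, and, for each choice, an inner distributed test of whether some adjacent pair $u\in V_i$, $v\in V_j$ has a common neighbour. The inner test is itself quantum: each vertex $w$, knowing $N(w)\cap V_i$ and $N(w)\cap V_j$, searches for a crossing edge, which reduces to two-party common-neighbour (set-disjointness) instances, each solved by Grover search while the two relevant vertices communicate through the cluster's routing. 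Up to $\poly(\log n)$ factors the round complexity is the product of the square roots of the sizes of the search spaces times the per-step communication cost, and choosing $s$ to balance these terms makes it $\tilde O(n^{1/4})$.

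\smallskip
\noindent\emph{Main obstacle, and finishing touches.} The crux is congestion control. A quantum search whose verification is itself a distributed subroutine cannot be parallelised the way a classical procedure would be, and superposing the outer search over many target group-pairs, with a verification that simultaneously engages many vertices, loads each edge with many routing requests; keeping this load $\tilde O(1)$ per round demands a carefully load-balanced scheduling of the common-neighbour tests over the expander routing of Step~1, and this constraint is what actually fixes the group size $s$. The remaining work is comparatively routine: the degenerate configurations where two or three of $u,v,w$ lie in the same group or cluster (same machinery with adjusted parameters, or recursion on a smaller instance), the bookkeeping for deleted and inter-cluster edges, and boosting the constant success probability of the quantum searches to $1-1/\poly(n)$ by $\Theta(\log n)$ repetitions (or fixed-point amplitude amplification), at the cost of only another $\poly(\log n)$ factor.
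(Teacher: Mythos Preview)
Your Step~1 is essentially right and matches the paper's reduction (Theorem~\ref{th:reduction}), though the paper is more explicit: the subproblem $\Tri$ receives not just a low-mixing cluster $\Gin=(\Vin,\Ein)$ but also an outgoing edge set $\Eout$ with the guarantee $\degin(u)\ge\degout(u)$, and triangles using $\Eout$ must be detected too.

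The real gap is Step~2. Your two-level nested search---outer amplitude amplification over the $O(s^2)$ group-pairs $(i,j)$, inner disjointness-style test for a common neighbour---does not yield $\tilde O(n^{1/4})$, and no choice of $s$ repairs it. After removing high-degree vertices the maximum degree is $D\approx \mym/\sqrt{\myn}$ (with $\mym,\myn$ the edge and vertex counts in the subnetwork). For a fixed $(i,j)$, running quantum disjointness in parallel on every edge of $E(\Vv_i,\Vv_j)$ costs $\tilde\Theta(\sqrt{D})$ rounds regardless of $s$; the outer search contributes a further factor~$s$, giving $s\sqrt{D}$, minimised at $s=1$ and then equal to $\mym^{1/2}/\myn^{1/4}$. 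That is $\tilde O(n^{1/4})$ only when $\mym=\tilde O(\myn)$; for a dense cluster with $\mym=\Theta(\myn^{2})$ it is $\tilde\Theta(\myn^{3/4})$. The variant you actually sketch---each $w$ Grover-searches over $a\in\Nn(w)\cap \Vv_i$ and then runs disjointness with $a$ on $\Nn(w)\cap \Vv_j$---fares worse: the two square roots multiply back to $D/s$, and the outer factor $s$ cancels, leaving $D=\mym/\sqrt{\myn}$ rounds. Your sentence ``choosing $s$ to balance these terms makes it $\tilde O(n^{1/4})$'' is precisely the unsubstantiated step.

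What is missing is that the checking procedure should not itself be quantum; it should be a \emph{classical} routine that uses the full bandwidth of the expander to process $\Theta(\myn)$ triples simultaneously in $\poly(\log n)$ rounds. The paper takes $t\approx\sqrt{\myn}$ groups, partitions the $t^3$ triples $(i,j,k)$ into $t$ batches $\Lambda_1,\dots,\Lambda_t$ of size $t^2$ each, and performs a \emph{single-level} Grover search over $\ell\in[t]$. The key combinatorial ingredient (Lemma~\ref{lemma:unique}) is that within one batch any two coordinates determine the third; this is exactly what guarantees that when the $t^2$ triples of $\Lambda_\ell$ are distributed among the vertices proportionally to degree and each vertex gathers $E(\Vv_i,\Vv_j)\cup E(\Vv_i,\Vv_k)\cup E(\Vv_j,\Vv_k)$ for its triples, no edge is requested twice, so the routing of Theorem~\ref{th:routing} completes in $\tilde O(\mix(\Gin))$ rounds. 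The total is then $\sqrt{t}\cdot\poly(\log n)=\tilde O(n^{1/4})$, uniformly in $\mym$. You have the Grover framework and the expander routing, but not this batching-of-triples structure with its load-balancing lemma, and that is where the $n^{1/4}$ actually comes from.
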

In comparison, as already explained, in the classical CONGEST model the best known upper bound on the randomized round complexity of triangle finding is $\tilde O(n^{1/3})$. Our result thus gives another example of quantum algorithm beating the best known classical algorithms in distributed computing.  It also exhibits an interesting phenomenon: while in the classical setting the best known upper bounds for the triangle finding and listing problems are both $\tilde O(n^{1/3})$, in the quantum setting the round complexity of the former problem becomes $\tilde O(n^{1/4})$ while the round complexity of the latter problem remains $\tilde \Theta(n^{1/3})$. Theorem \ref{th:main} thus shows that triangle finding is easier than triangle listing in the quantum CONGEST model.\vspace{2mm}

\noindent{\bf Overview of our approach.}
Our approach starts similarly to the classical algorithms developed by Chang, Pettie and Zhang~\cite{Chang+SODA19} and Chang and Saranurak~\cite{Chang+PODC19}. As in \cite{Chang+SODA19}, by using an expander decomposition of the network, the triangle finding problem over the whole network is reduced to the task of detecting whether the subnetwork induced by a set of edges $\Ein\cup\Eout$ contains a triangle. We denote the latter problem $\Tri$.  Here $\Ein$ and $\Eout$ are two subsets of edges that satisfy specific conditions. In particular, the subnetwork induced by the edges in~$\Ein$ is guaranteed to have low mixing time (but in general nothing can be said about the mixing time of the subnetwork induced by $\Ein\cup\Eout$). We use the algorithm from \cite{Chang+PODC19} to compute the expander decomposition efficiently, which implies that an efficient algorithm for $\Tri$ gives an efficient algorithm for the triangle finding problem over the whole network. More precisely, a $\tilde O(n^{1/4})$-round algorithm for $\Tri$ leads to a $\tilde O(n^{1/4})$-round algorithm for the triangle finding problem. The details of this reduction are described in Section \ref{section:reduction}.

Our main approach to solve the problem $\Tri$ is to apply the framework for quantum distributed search developed in \cite{LeGall+PODC18}. We briefly sketch the main ideas. Let us write $\Vv\subseteq V$ the set of vertices of the graph induced by the edges in $\Ein\cup \Eout$. We will partition~$\Vv$  into $t=\tilde \Theta(\sqrt{n})$ subsets $\Vv_1,\Vv_2,\ldots,\Vv_t$ each containing $\tilde O(\sqrt{n})$ vertices. Let us write $\Lambda=[t]\times [t]\times [t]$. We will try to find a triple $(i,j,k)\in \Lambda$ for which there exist $u\in \Vv_i$, $v\in\Vv_j$ and $w\in\Vv_k$ such that $\{u,v,w\}$ is a triangle. To do this, we partition the set $\Lambda$ into $t$ subsets $\Lambda_1,\cdots,\Lambda_t$ each containing $t^2=\tilde \Theta(n)$ triples and consider the following search problem: find an index $\ell\in[t]$ such that the set~$\Lambda_\ell$ contains a triple $(i,j,k)$ for which there exist $u\in \Vv_i$, $v\in\Vv_j$ and $w\in\Vv_k$ such that $\{u,v,w\}$ is a triangle. Quantum distributed search enables us to solve this problem in $\tilde O(\sqrt{t}\delta)$ rounds in the quantum CONGEST model if a checking procedure (i.e., a procedure that on input $\ell$ checks if the set $\Lambda_\ell$ contains a triple $(i,j,k)$ for which there exist $u\in \Vv_i$, $v\in\Vv_j$ and $w\in\Vv_k$ such that $\{u,v,w\}$ is a triangle) can be implemented in $\delta$ rounds. 

The checking procedure distributes the $\tilde \Theta(n)$ triples in $\Lambda_\ell$ among the vertices of the network proportionally to the degree of the vertices. In particular, vertices with very low degree do not receive any triple. This technique is essentially the same as for the main procedure in the classical algorithm by Chang, Pettie and Zhang~\cite{Chang+SODA19}. Next, each vertex owning a triple $(i,j,k)$ checks whether there exist $u\in \Vv_i$, $v\in\Vv_j$ and $w\in\Vv_k$ such that $\{u,v,w\}$ is a triangle, which requires gathering information of all the edges with extremities in these sets. Since the subnetwork induced by the edges in $\Ein$ has low mixing time, we would like to use the same classical routing techniques \cite{Ghaffari+PODC17,Ghaffari+DISC18} as used in the main procedure of the classical algorithms~\cite{Chang+SODA19,Chang+PODC19}. Special care is nevertheless required to ensure that we can apply those routing techniques. (This was not needed in \cite{Chang+SODA19,Chang+PODC19} since these prior works used a different approach: each vertex simply loaded all the necessary edges from $\Lambda$ in $\tilde O(n^{1/3})$ rounds. In comparison, we need to guarantee that the necessary edges from $\Lambda_\ell$, for a fixed $\ell$, can be loaded in negligible time.) We solve this difficulty by carefully defining the sets $\Lambda_\ell$ so that the the same edge is not requested by two distinct vertices at the same time (this is the contents of Lemma \ref{lemma:unique} in Section \ref{sub-small}).   

Another technical difficulty is how to handle vertices with very high degree. In the classical setting this was trivial, since it was enough to gather in $\tilde O(n^{1/3})$ rounds all the information about the edges of the network at one of these high-degree vertices. Since we want to construct a $\tilde O(n^{1/4})$-round quantum algorithm we cannot use this approach. Instead, we develop an approach based on the well-known protocol from the two-party quantum computation complexity of the disjointness function \cite{Buhrman+STOC98}. This is explained in Section \ref{sub-large}. \vspace{2mm}
 
\noindent{\bf Other related works.}
The triangle finding problem is also a central problem in quantum query complexity. While many quantum query algorithms have been designed in this setting \cite{BelovsSTOC12,LeGallFOCS14,Lee+SODA13,Magniez+SICOMP07}, they are based on quantum techniques (e.g., quantum walk search and learning graphs) that do not seem to lead to efficient algorithms in the distributed setting.

\section{Preliminaries}\label{section:preliminaries}
{\bf Graph theory.}
All the graphs considered in this paper are undirected and unweighted.  For any graph $G=(V,E)$ and any vertex $u\in V$, we denote $\deg(u)$ the degree of $u$ and $\Nn(u)$ the set of neighbors of $u$. We write $n=|V|$ and $m=|E|$. For any set $E'\subseteq E$, we denote $\deg_{E'}(u)$ the number of edges in $E'$ incident to $u$ and write $\Nn_{E'}$ the set of all neighbors $v\in \Nn(u)$ such that $\{u,v\}\in E'$. We denote $\diam(G)$ the diameter of $G$ and $\mix(G)$ the mixing time of~$G$, i.e., the number of steps of a random walk over $G$ needed to obtain a distribution close to the stationary distribution (we refer to \cite{Ghaffari+PODC17} for a precise definition). For any positive integer $t$, we write $[t]=\{1,2,\ldots,t\}$.

We will use the following lemma from \cite{Chang+SODA19} in our main algorithm.
\begin{lemma}[Lemma 4.2 in \cite{Chang+SODA19}]\label{lemma}
Consider a graph with $m$ edges and $n$ vertices. Let $p$ be such that $p^2m\ge 400(\log n)^2$.
Suppose that the degree of any vertex of the graph is at most $mp/(20\log n)$.  Generate a subset $S$ by letting each vertex join $S$ independently with probability $p$. Then with probability at least $1-1/\poly(n)$, the number of edges in the subgraph induced by $S$ is at most $6p^2m$.
\end{lemma}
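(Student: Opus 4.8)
Write $Z_v=\mathbf 1[v\in S]$ for $v\in V$; the $Z_v$ are independent Bernoulli variables of mean $p$, and the quantity to control is $X:=\sum_{\{u,v\}\in E}Z_uZ_v$, the number of induced edges, with $\E[X]=p^2m$. The plan is to prove the stronger bound $\Pr[X\ge 4p^2m]\le 1/\poly(n)$. What makes this delicate is that $X$ is a quadratic polynomial in the $Z_v$'s, not a sum of independent terms: changing one $Z_v$ moves $X$ by as much as $\deg(v)\le\tau:=mp/(20\log n)$. When $p=o(1)$ the quantity $\tau$ dwarfs the mean $p^2m$, so McDiarmid's inequality, the Chernoff bound for read-$k$ families, and Bernstein's inequality applied with worst-case increment $\tau$ each only yield a failure probability of the form $\exp(-\Omega(\poly(p)\cdot\log n))$, which is not $1/\poly(n)$. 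The resolution is to run a martingale argument at the scale of the \emph{typical} rather than worst-case contribution of a vertex, and the hypothesis $p^2m\ge 400(\log n)^2$ is exactly what makes this work.

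Concretely, fix any ordering $v_1,\dots,v_n$ of the vertices, let $\mathcal F_i=\sigma(Z_{v_1},\dots,Z_{v_i})$, and charge each edge to its later endpoint, so $X=\sum_i Z_{v_i}D_i$ with $D_i:=\big|\{\,j<i:\ v_j\in\Nn(v_i),\ Z_{v_j}=1\,\}\big|$ being $\mathcal F_{i-1}$-measurable. Then $X=M_n+p\sum_i D_i$, where $M_n:=\sum_i(Z_{v_i}-p)D_i$ is a martingale for $(\mathcal F_i)$. The ``drift'' term is easy to handle: reversing the order of summation gives $\sum_i D_i=\sum_j Z_{v_j}r_j$ with $r_j\le\deg(v_j)\le\tau$ and $\sum_j r_j=m$, a sum of independent variables of mean $pm$, total variance at most $p\tau m$ and individual size at most $\tau$, so Bernstein's inequality (using $pm/\tau=20\log n$) gives $\Pr[\,p\sum_i D_i\ge 2p^2m\,]\le 1/\poly(n)$.

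For $M_n$ I would first fix the right truncation scale. For each fixed $v$, $\deg_S(v)=\sum_{u\in\Nn(v)}Z_u$ is a sum of $\deg(v)$ independent Bernoulli$(p)$ variables of mean $p\deg(v)\le p\tau$, so by a Chernoff bound $\Pr[\deg_S(v)\ge \hat b]\le 1/\poly(n)$ with $\hat b:=p^2m/(5\log n)=4p\tau$ — here $\hat b\ge 80\log n$ by hypothesis, which is precisely what makes this probability polynomially small — and a union bound gives that $\mathcal G:=\{\deg_S(v)\le\hat b\ \text{for all }v\}$ holds with probability $1-1/\poly(n)$. On $\mathcal G$ every martingale increment obeys $|(Z_{v_i}-p)D_i|\le D_i\le\deg_S(v_i)\le\hat b$, and (combined with the event $\sum_i D_i\le 2pm$ from the previous paragraph) the accumulated conditional variance is at most $p\hat b\sum_i D_i\le 2\hat b p^2 m$. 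Now apply Freedman's inequality — formally to the martingale stopped the first time the bound $D_{i+1}\le\hat b$ would fail, a standard device that upgrades the high-probability increment bound to an almost-sure one — with deviation $2p^2m$, increment bound $\hat b$, and variance budget $2\hat b p^2m$: since $p^2m/\hat b=\Theta(\log n)$ the resulting exponent is $\Theta(\log n)$, giving $\Pr[M_n\ge 2p^2m]\le 1/\poly(n)$. Combined with the drift bound this yields $\Pr[X\ge 4p^2m]\le 1/\poly(n)$, which is stronger than the claim. The main obstacle — and the one spot where the hypothesis is genuinely needed — is calibrating $\hat b$: it must be $\Omega(\log n)$ so that $\Pr[\deg_S(v)>\hat b]$ is polynomially small at every vertex, yet only of order $p\tau$ (not $\tau$) so that Freedman's inequality with increments at most $\hat b$ is strong enough; the condition $p^2m=\Omega((\log n)^2)$ is exactly what makes both demands simultaneously satisfiable.
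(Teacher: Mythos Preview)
The paper does not prove this lemma: it is quoted verbatim as ``Lemma 4.2 in \cite{Chang+SODA19}'' and used as a black box, so there is no in-paper argument to compare against. That said, your proposal is a correct, self-contained proof.

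Your decomposition $X=M_n+p\sum_i D_i$ with $D_i$ predictable is the right move, and the two tail bounds go through with the constants you give. A couple of minor remarks. First, for the per-vertex Chernoff step you should state the form you are using: since $\mu=p\deg(v)$ can be tiny, the multiplicative Chernoff bound $\Pr[\mathrm{Bin}\ge(1+\delta)\mu]\le e^{-\Theta(\delta^2\mu)}$ is useless here; what works is the ``large deviation'' form $\Pr[\mathrm{Bin}(d,p)\ge a]\le(e\mu/a)^a$, valid for $a\ge\mu$, which with $a=\hat b\ge 4\mu$ and $\hat b\ge 80\log n$ gives $n^{-\Omega(1)}$ regardless of how small $\mu$ is. Second, in the Freedman step you are implicitly intersecting with two events (the increment cap via $\mathcal G$ and the variance cap via $\sum_i D_i\le 2pm$); the clean way to make this rigorous is to stop at the minimum of the two predictable stopping times --- the first $i$ with $D_{i+1}>\hat b$ and the first $i$ with $\sum_{j\le i+1}D_j>2pm$ --- so that both the increment bound $\hat b$ and the variance budget $2\hat b p^2m$ hold almost surely for the stopped process, and then observe that on the good event neither stopping fires. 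You allude to this for the increment bound; just make sure the variance budget is handled the same way.

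As for the original source, the proof in \cite{Chang+SODA19} follows a similar two-stage logic (control $\sum_{v\in S}\deg(v)$ by a sum-of-independents Chernoff, then control the induced edge count given that) but is packaged slightly differently; your martingale/Freedman presentation is arguably cleaner because it isolates exactly where the hypothesis $p^2m\ge 400(\log n)^2$ is used --- namely, to make the truncation level $\hat b$ simultaneously $\Omega(\log n)$ and $O(p\tau)$.
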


\noindent{\bf Classical distributed computing.}
In the classical CONGEST model, the graph $G = (V, E)$ represents the topology of the network. The computation proceeds with round-based synchrony and each vertex can send one $O(\log n)$-bit message to each adjacent vertex per round. All links (corresponding to the edges of $G$) are reliable and suffer no faults. Each vertex has a distinct identifier from a domain $\Ii$ with $|\Ii| = \mathrm{poly}(n)$. It is also assumed that each vertex can access an infinite sequence of local random bits. Initially, each vertex knows nothing about the topology of the network except the set of edges incident to itself and the value of $n$. 


Our quantum algorithm will be based on several classical distributed algorithms from the literature. A first crucial ingredient is the following recent result by Chang and Saranurak~\cite{Chang+PODC19} that shows how to efficiently compute a good expander decomposition of the graph.
\begin{theorem}[\cite{Chang+PODC19}]\label{th:decomposition}
In the classical CONGEST model, there is a $O(n^{0.1})$-round algorithm that computes with probability at least $1-1/\poly(n)$ a partition 
\[
V=V_1\cup V_2\cup\cdots\cup V_s
\] 
of the vertex set $V$ that satisfies the following two conditions:
\begin{itemize}
\item
for each $i\in[s]$, the subgraph induced by the vertex set $V_i$ has mixing time
$O(\poly(\log n))$;
\item
the number of inter-component edges (i.e., the number of edges with one endpoint in $V_i$ and one endpoint in $V_j$, for $i\neq j$) is at most $|E|/10$.
\end{itemize}
\end{theorem}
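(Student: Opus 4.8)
Since a graph has mixing time $O(\poly(\log n))$ precisely when its conductance is $\Omega(1/\poly(\log n))$, it suffices to compute in $O(n^{0.1})$ rounds a partition of $V$ into pieces of conductance $\Omega(1/\poly(\log n))$ that is crossed by at most $|E|/10$ edges, i.e., an \emph{expander decomposition}. The plan is to implement distributively the standard recursive construction: on the current subgraph $H$ (initially $H=G$) run a sparse-cut subroutine; if it certifies that $H$ already has conductance $\Omega(1/\poly\log n)$, output $H$ as one of the parts $V_i$; otherwise it returns a cut of $H$ of conductance $O(1/\poly\log n)$, together with a trimming of its larger side, and we recurse on the resulting parts, charging the cut edges to the smaller side. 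Because a vertex can lie on the smaller side of a cut at only $O(\log n)$ levels of the recursion, the total number of edges removed is $O(m\cdot\poly\log n / \poly\log n)$, which is driven below $|E|/10$ by taking the target conductance to be $1/\poly\log n$ with enough logarithmic factors; the recursion then has depth $O(\poly\log n)$.

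The sparse-cut subroutine is the cut-matching game: it runs $O(\log^2 n)$ iterations, each routing an approximate perfect matching between two halves of the vertex set of $H$ with low congestion in $H$ and adding it to an auxiliary graph; after all iterations the auxiliary graph is either an expander that embeds into $H$ with low congestion (certifying that $H$ itself has good conductance) or it exposes a low-conductance cut of $H$. Subgraphs at a common recursion level are vertex-disjoint, so their cut-matching games can be executed simultaneously without inter-piece congestion.

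The crux --- and the step I expect to be the main obstacle --- is realizing each cut-matching iteration, i.e., routing a matching/unit flow inside $H$, in $\tilde O(1)$ rounds even though $H$ is not yet known to be well connected, so that the expander-routing primitives of \cite{Ghaffari+PODC17, Ghaffari+DISC18} cannot be applied to $H$ directly. The resolution is a bootstrapping recursion on a size parameter: route inside $H$ by reusing the routing scheme already established for its parent (whose expansion was certified earlier in the recursion), with the trimming step guaranteeing that each child retains a constant fraction of its parent's conductance, so that these routing schemes degrade only mildly with depth. Carefully tracking how this routing overhead compounds over the $O(\poly\log n)$ recursion levels and the $O(\log^2 n)$ iterations per level, and trading it against the polylogarithmic slack left in the conductance target, is exactly what yields the $O(n^{0.1})$ round bound. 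The rest is routine bookkeeping: amplify the randomized cut-matching and routing subroutines and take a union bound over the $\poly(n)$ invocations so that all succeed with probability $1-1/\poly(n)$, and then assemble $V=V_1\cup\cdots\cup V_s$ from the certified pieces.
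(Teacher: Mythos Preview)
The paper does not prove this theorem at all: it is quoted verbatim as a black-box result from \cite{Chang+PODC19} and used only as an ingredient in the reduction of Section~\ref{section:reduction}. So there is no ``paper's own proof'' to compare your sketch against; any assessment has to be against the actual argument of Chang and Saranurak.

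Your high-level picture (recursive sparse-cut via a cut-matching game, charge inter-cluster edges to the small side, depth $O(\poly\log n)$) is the right framework, but the mechanism you propose for the distributed step contains a genuine gap. You write that one should ``route inside $H$ by reusing the routing scheme already established for its parent (whose expansion was certified earlier in the recursion)''. In the recursion you describe, however, the parent was \emph{not} certified as an expander: it was recursed on precisely because the cut-matching game exposed a sparse cut in it, so no expander-routing primitive is available for the parent, and hence none can be inherited by the child. The actual resolution in \cite{Chang+PODC19} is different: each cut-matching iteration routes its matching by a direct flow/push-relabel style computation in $H$ itself whose round complexity scales with the inverse of the target conductance, not by appealing to any pre-existing expander structure; the $O(n^{0.1})$ bound then arises from a parameter trade-off between the target conductance~$\phi$, the fraction of inter-cluster edges, and the number of recursion levels, rather than from ``routing overhead compounding over levels''. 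So while your outline identifies the right obstacle, the fix you sketch would not work and the source of the $n^{0.1}$ is misattributed.
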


We will also use the following technical lemma from \cite{Chang+SODA19} that shows how to compute efficiently a new ID assignment  that gives a good estimation of the degree of any vertex of the graph.
\begin{lemma}[Lemma 4.1 in \cite{Chang+SODA19}]\label{th:ID}
In the classical CONGEST model, there is a $O(\diam(G)+\log n)$-round deterministic algorithm that computes a bijective map $\gamma\colon V\to\{1,\ldots,|V|\}$ and a function $d\colon\{1,\ldots,|V|\}\to \{0,1,\ldots,\floor{\log_2(n)}\}$ satisfying the following conditions: 
\begin{itemize}
\item[(i)]
$\gamma(u)\le \gamma(v)$  implies $\floor{\log_2(\deg(u))}\le \floor{\log_2(\deg(v))}$ for any $u,v\in V$;
\item[(ii)]
$d(\gamma(u))=\floor{\log_2(\deg(u))}$ for all $u\in V$.
\end{itemize}
More precisely, after running the algorithm each vertex $u$ knows $\gamma(u)$ and can locally compute $d(y)$ for any $y\in \{0,1,\ldots,|V|\}$.  
\end{lemma}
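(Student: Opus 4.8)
The plan is to build a rooted spanning tree of the network and then move the relevant statistics up and then down the tree using pipelined aggregation. First I would elect the vertex with minimum identifier as a leader and grow a BFS tree rooted at it: flooding the minimum identifier and recording, at each vertex, the incident edge along which it was first reached (breaking ties by neighbor identifier) accomplishes this deterministically in $O(\diam(G))$ rounds and produces a tree of depth at most $\diam(G)$. Each vertex $u$ then locally computes its \emph{degree class} $c(u)=\floor{\log_2(\deg(u))}$; note that there are only $k=\floor{\log_2(n)}+1=O(\log n)$ possible classes, which is what will keep all the vectors below small.

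Next I would run a pipelined convergecast that sends up the tree, for every vertex $v$, the vector $(\mathrm{cnt}_v(0),\dots,\mathrm{cnt}_v(k-1))$, where $\mathrm{cnt}_v(c)$ is the number of vertices of class $c$ in the subtree rooted at $v$: an internal vertex forms its own vector by summing the vectors received from its children and adding the indicator of its own class. Since this vector has $k=O(\log n)$ coordinates, each an integer in $\{0,\dots,n\}$ of $O(\log n)$ bits, the standard pipelined aggregation of $O(\log n)$ tokens over a tree of depth $\diam(G)$ runs in $O(\diam(G)+\log n)$ rounds. The root thereby learns the global counts $n_c=\mathrm{cnt}_{\mathrm{root}}(c)$ and computes the offsets $\mathrm{offset}(c)=\sum_{c'<c}n_{c'}$, which carve $\{1,\dots,|V|\}$ into one contiguous block of length $n_c$ per class.

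Then I would perform a pipelined top-down pass carrying two things at once. The root broadcasts the whole vector $(\mathrm{offset}(0),\dots,\mathrm{offset}(k-1))$ to every vertex, which suffices for each vertex to evaluate $d$ locally, since $d(y)$ is just the unique class $c$ with $\mathrm{offset}(c)<y\le\mathrm{offset}(c)+n_c$. Simultaneously, each vertex $v$ computes the vector $\mathrm{pre}_v$ counting, per class, the number of vertices of that class strictly preceding $v$ in the tree preorder (fixing, at each node, an ordering of its children by identifier). Using the child subtree counts already gathered in the convergecast, a parent $p$ produces each child's $\mathrm{pre}$ vector by the recursion $\mathrm{pre}_{v}(c)=\mathrm{pre}_{p}(c)+[c(p)=c]+\sum_{w\prec v}\mathrm{cnt}_w(c)$, where the sum is over siblings $w$ ordered before $v$; this pass is again a pipelined broadcast of $O(\log n)$-coordinate vectors and costs $O(\diam(G)+\log n)$ rounds. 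Each vertex finally sets $\gamma(v)=\mathrm{offset}(c(v))+\mathrm{pre}_v(c(v))+1$.

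Correctness then follows by inspection: within a class the map $v\mapsto\mathrm{pre}_v(c(v))$ is a bijection onto $\{0,\dots,n_{c(v)}-1\}$ (distinct preorder ranks), so $\gamma$ is a bijection onto $\{1,\dots,|V|\}$ with $\gamma(v)\in[\mathrm{offset}(c(v))+1,\,\mathrm{offset}(c(v))+n_{c(v)}]$; this interval membership gives condition (ii), and since the blocks for distinct classes are disjoint and ordered, $\gamma(u)\le\gamma(v)$ forces $c(u)\le c(v)$, which is condition (i). The total round complexity is $O(\diam(G)+\log n)$ and every step is deterministic. The one point requiring genuine care, which I expect to be the main obstacle, is the pipelining: naively shipping each $O(\log n)$-entry vector as a block would cost $O(\diam(G)\cdot\log n)$ rounds, so one must schedule the coordinates coordinate-by-coordinate, so that each tree edge forwards exactly one coordinate per round, in order to obtain the additive $O(\diam(G)+\log n)$ bound rather than a multiplicative one.
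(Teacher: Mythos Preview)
Your proof is correct, but note that the paper does not actually prove this lemma: it is quoted verbatim as Lemma~4.1 of \cite{Chang+SODA19} and used as a black box, so there is no ``paper's own proof'' to compare against here. That said, your argument is the standard one and almost certainly coincides with what \cite{Chang+SODA19} does: bucket vertices by $\lfloor\log_2(\deg(\cdot))\rfloor$, aggregate the $O(\log n)$ bucket sizes up a BFS tree, broadcast the resulting offsets back down, and assign each vertex a rank inside its bucket via a preorder index. The only place one could slip up is exactly the one you flag---getting the additive $O(\diam(G)+\log n)$ bound rather than the multiplicative one---and your coordinate-by-coordinate pipelining handles it. One small implementation detail you leave implicit but should make explicit if you write this out in full: in the top-down pass a parent must retain each child's individual $\mathrm{cnt}_w$ vector from the convergecast (not just their sum) in order to form the sibling prefix sums $\sum_{w\prec v}\mathrm{cnt}_w(c)$; this is free in CONGEST since the parent received those vectors one coordinate per round per child and can store them locally.
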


We will also use the following result by Ghaffari, Kuhn and Su \cite{Ghaffari+PODC17} about randomized routing in networks with small mixing time (see also the discussion in Section 3 of \cite{Chang+PODC19}).
\begin{theorem}[\cite{Ghaffari+PODC17}]\label{th:routing}
In the classical CONGEST model, there exists a $O(\mix(G)\cdot n^{o(1)})$-round algorithm that builds a distributed data structure. This data structure enables the vertices to implement the following routing task with probability at least $1-1/\poly(n)$ in $O(\mix(G)\cdot n^{o(1)})$ rounds: given a set of point-to-point routing requests, each given by the IDs of the corresponding source-destination pair and such that each vertex $u$ is the source and the destination of at most $O(\deg(u))$ messages, delivers all the messages.
\end{theorem}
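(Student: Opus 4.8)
The plan is to realize the desired routing by embedding a well-structured virtual overlay graph into $G$ using random walks, exploiting the fact that for the lazy random walk on an undirected graph the stationary distribution is $\pi(v)=\deg(v)/(2|E|)$ --- exactly the profile matching the hypothesis that each vertex $u$ is the source and destination of only $O(\deg(u))$ messages. Write $\tau=\mix(G)$. The first (data-structure) phase builds a demand-independent embedding, and the second phase uses it to serve any degree-respecting demand; separating the two is what lets both be charged $\tilde O(\tau\cdot n^{o(1)})$ rounds.

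For the data structure, I would split each vertex $v$ into $\deg(v)$ ``ports,'' so that there are $2|E|$ ports in total, and launch one length-$\Theta(\tau)$ random walk from each port simultaneously. The crucial observation is that the initial configuration ``one walk per port'' is precisely the degree-proportional distribution $\pi$ scaled by $2|E|$, and since $\pi$ is stationary the aggregate distribution of walks stays degree-proportional at every intermediate step, not merely at the end. Hence at each step the expected number of walks sitting at a vertex $u$ is $\Theta(\deg(u))$, each crossing one of its $\deg(u)$ incident edges, so every edge carries $O(1)$ walks per step in expectation. Recording the (start port, end port) pair of every walk defines the edge set of a virtual graph $H$ on the $2|E|$ ports: each virtual edge is backed by an explicit path of length $O(\tau)$ in $G$, and the whole collection of backing paths has congestion $n^{o(1)}$ per $G$-edge. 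This embedding is the data structure, built in $\tilde O(\tau\cdot n^{o(1)})$ rounds.

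The routing phase first argues that $H$ is a good object to route on. Because length-$\Theta(\tau)$ walks mix, their endpoints are close to independent near-uniform samples over the ports, so $H$ behaves like a random near-regular graph and is an expander with high probability. Any admissible demand in $G$ becomes a demand on $H$ in which each port is the source and destination of $O(1)$ messages, i.e.\ an $O(1)$-relation. Such a relation is routed on the expander $H$ in $O(\log|H|)=\tilde O(1)$ virtual rounds by the standard Valiant two-phase scheme (scatter each message to a uniformly random port, then gather it to its true destination), and each virtual round is simulated over $G$ by pushing messages along the precomputed walk paths. Since those paths have dilation $O(\tau)$ and congestion $n^{o(1)}$, and an $O(\log n)$-bit message fits through an edge per round, the simulation costs $\tilde O(\tau\cdot n^{o(1)})$ rounds, giving the claimed bound with the stated success probability.

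The main obstacle is the congestion analysis of the walk embedding, since it underlies both the data-structure cost and the per-round simulation cost. The per-step expected edge load is $O(1)$ by stationarity, but converting this into a high-probability uniform bound over all $|E|$ edges and all $\Theta(\tau)$ steps requires concentration for sums of indicators over the independently sampled walks, together with control of the deviation of the length-$\Theta(\tau)$ endpoint distribution from $\pi$. Keeping that deviation small enough that $H$ is simultaneously low-congestion and expanding is precisely where the $n^{o(1)}$ slack is spent: one slightly lengthens the walks, or builds the embedding through a recursive / low-diameter scaffold, so that the near-stationarity error and the permutation-routing overhead on $H$ are both absorbed into a $2^{O(\sqrt{\log n\,\log\log n})}=n^{o(1)}$ factor. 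I would carry out the concentration step last, after fixing the walk length and the structure of $H$, since the slack required there dictates the final exponent hidden in the $n^{o(1)}$ term.
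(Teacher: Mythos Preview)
The paper does not contain a proof of this statement: Theorem~\ref{th:routing} is quoted from Ghaffari, Kuhn and Su~\cite{Ghaffari+PODC17} and used as a black box, so there is no ``paper's own proof'' to compare your proposal against.

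That said, your sketch is broadly faithful to the actual argument in~\cite{Ghaffari+PODC17}: the core idea there is indeed to launch degree-many lazy random walks of length $\Theta(\mix(G))$ per vertex, use stationarity of the degree-proportional distribution to bound congestion, and treat the (start,\,end) pairs as the edges of a virtual overlay on which one routes. One point worth sharpening: in~\cite{Ghaffari+PODC17} the $n^{o(1)}$ overhead is not merely slack absorbed at the end but arises structurally from a \emph{recursive} construction --- the overlay is not immediately a good enough expander to route an $O(1)$-relation in $\tilde O(1)$ hops, so one repeats the embedding on the overlay itself, obtaining a hierarchy of depth $O(\sqrt{\log n})$, which is the source of the $2^{O(\sqrt{\log n\log\log n})}$ factor. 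You mention this recursive scaffold only as an alternative for controlling deviation from $\pi$, but it is in fact the mechanism that makes the routing phase work at all. If you were to write this out in full, that recursion (and not the concentration bound for edge loads, which is comparatively routine via Chernoff over independent walks) would be the technical heart.
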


\noindent{\bf Quantum distributed computing.}
We assume that the reader is familiar with the basic concepts of quantum computation and refer to, e.g., \cite{Nielsen+00} for a good reference. The quantum CONGEST model is defined (see \cite{Elkin+PODC14,LeGall+PODC18} for details) as the quantum version of the classical CONGEST model, where the only difference is that each exchanged message consists of $O(\log n)$ quantum bits instead of $O(\log n)$ bits. In particular, initially the vertices of the network do not share any entanglement.

For the quantum CONGEST model, Le Gall and Magniez~\cite{LeGall+PODC18} introduced a framework for quantum distributed search, which can be seen as a distributed implementation of Grover's search~\cite{GroverSTOC96}, one of the most important centralized quantum algorithms. Let $X$ be a finite set and $f\colon X\to \{0,1\}$ be a Boolean function over~$X$. Let $u$ be an arbitrary vertex of the network (e.g., an elected leader). Assume that vertex~$u$ can evaluate the function $f$ in~$r$ rounds: assume that there exists an $r$-round distributed checking procedure $\Cc$ such that vertex $u$, when receiving as input $x\in X$, outputs $f(x)$.  Now consider the following problem: vertex $u$ should find one element $x\in X$ such that $f(x)=1$ (or report that no such element exists). The trivial strategy is to compute $f(x)$ for each $x\in X$ one by one, which requires $r|X|$ rounds. Ref.~\cite{LeGall+PODC18} showed that in the quantum CONGEST model this problem can be solved with probability at least $1-1/\poly(|X|)$ in $\tilde O(r\sqrt{|X|})$ rounds. 

As explained in \cite{LeGall+PODC18}, the procedure $\Cc$ is often described as a classical (deterministic or randomized) procedure. It can then be quantized using standard techniques: one first transforms it to a reversible map using standard techniques~\cite{Bennett+SICOMP89} and then converts it into a quantum procedure.
\section{Reduction to Triangle finding over Subnetworks}\label{section:reduction}
In this section we present the reduction by Chang, Pettie and Zhang~\cite{Chang+SODA19} from triangle finding over the whole network to triangle finding over subnetworks with small mixing time. In this section again, $G=(V,E)$ represents the whole network on which we want to solve the triangle finding problem, and we write $n=|V|$.\vspace{2mm}

\noindent{\bf Triangle finding over subnetworks with small mixing time.} We now present the computational problem considered, which we denote $\Tri$ (the description is also summarized in Figure \ref{fig:problem-tri}). 

The input of $\Tri$ is a connected subgraph $\Gin=(\Vin,\Ein)$ of $G$ such that $\mix(\Gin)=\poly(\log n)$, and a set of edges $\Eout\subseteq E$ joining vertices in $\Vin$ to vertices in $V\setminus\Vin$ that satisfies the following condition:
\[
\degin(u)\ge \degout(u) \textrm{ for all } u\in \Vin.
\]
We write $\Vout$ the set of vertices in $V\setminus \Vin$ that appear as an endpoint of an edge in $\Eout$. The goal is to detect if there is a triangle in the subgraph of $G$ induced by the edge set $\Ein\cup\Eout$. Note that such a triangle is either a triangle of~$\Gin$, or consists of two vertices in $\Vin$ and one vertex in $\Vout$. Note that, while $\Gin$ (the subnetwork induced by $\Ein$) has small mixing time, in general nothing can be said about the mixing time of the subnetwork induced by $\Ein\cup\Eout$. \vspace{2mm}

\begin{figure}
\begin{center}
\fbox{
\begin{minipage}{14 cm} \vspace{2mm}

\noindent$\Tri$\\\vspace{-3mm}

\noindent\hspace{3mm} Input: a connected subgraph $\Gin=(\Vin,\Ein)$ of $G$ and a set of edges $\Eout\subseteq E$  

\noindent\hspace{15mm} 
joining vertices in $\Vin$ to vertices in $V\setminus\Vin$

\noindent\hspace{15mm} 
(each vertex $u\in G$ knows if $u\in \Vin$ and gets $\Nn_{\Ein}(u)$ and $\Nn_{\Eout}(u)$)  \vspace{2mm}


\noindent\hspace{3mm} Promise: (i) $\mix(\Gin)=\poly(\log n)$ 

\noindent\hspace{18mm} (ii) $\degin(u)\ge \degout(u)$ for all $u\in \Vin$\vspace{2mm}

\noindent\hspace{3mm} Goal: detect if there exists a triangle $\{u,v,w\}$ with $u,v,w\in \Vin\cup\Vout$ and 

\noindent\hspace{14mm} 
$\{u,v\},\{u,w\},\{v,w\}\in\Ein\cup\Eout$ 
\vspace{2mm}
\end{minipage}
}
\end{center}\vspace{-4mm}
\caption{Problem \Tri.}\label{fig:problem-tri}
\end{figure}

\noindent{\bf The reduction.}
Chang, Pettie and Zhang~\cite{Chang+SODA19} proved that when a good expander decomposition of the network is known, triangle finding over the whole network can be efficiently reduced to solving several instances of $\Tri$. Combined with Theorem~\ref{th:decomposition}, this gives an efficient reduction from triangle finding to $\Tri$, which we state in the following theorem. For completeness we include a sketch of the proof (we refer to  \cite{Chang+SODA19,Chang+PODC19} for the details).
\begin{theorem}[\cite{Chang+SODA19,Chang+PODC19}]\label{th:reduction}
Assume that there exists an $r$-round distributed algorithm $\Aa$ that solves the problem $\Tri$ with probability at least $1-1/n^3$ and uses only the edges in $\Ein\cup\Eout$ for communication. Then there exists a $O(r\log n+n^{0.1})$-round distributed algorithm that solves the triangle finding problem over the whole graph $G=(V,E)$ with probability at least $1-1/\poly(n)$.
\end{theorem}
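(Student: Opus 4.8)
\emph{Proof plan.}
The plan is to reduce triangle finding on $G$ to $\Tri$ through $O(\log n)$ levels of an expander-decomposition recursion: at each level we build a decomposition via Theorem~\ref{th:decomposition}, use $\Aa$ to resolve every triangle having two vertices in a common part of the decomposition, and recurse on a residual graph that carries the remaining triangles and whose edge set shrinks by a constant factor.

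First I would run the algorithm of Theorem~\ref{th:decomposition} (on each connected component separately, if the current graph is disconnected), obtaining in $O(n^{0.1})$ rounds a partition $V=V_1\cup\cdots\cup V_s$ in which every induced subgraph $G[V_i]$ has mixing time $\poly(\log n)$ — hence is connected — and at most $|E|/10$ edges are inter-component. Every triangle of $G$ either (a) has two vertices in a common part, equivalently contains an intra-component edge, or (b) has its three vertices in three distinct parts, i.e.\ all three of its edges are inter-component; the type-(b) triangles will be carried by the residual graph defined below. For type (a), for each $i\in[s]$ I would call $\Aa$ on the $\Tri$-instance with $\Vin=V_i$, with $\Ein$ the set of edges of $G$ internal to $V_i$ (so the mixing-time promise of $\Tri$ holds by Theorem~\ref{th:decomposition}), and with $\Eout$ a suitably chosen set of inter-component edges incident to $V_i$. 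These $s$ instances act on pairwise vertex-disjoint parts, and any single edge of $G$ belongs to at most two of them (an inter-component edge to the instances of its two endpoints), so they run simultaneously with only a factor-$2$ congestion blow-up — i.e.\ in $O(r)$ rounds — and still respect the restriction that communication uses only edges of $\Ein\cup\Eout$.

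The delicate point, which I expect to be the main obstacle, is enforcing the degree promise $\degin(u)\ge\degout(u)$ of $\Tri$: a vertex of $V_i$ may have more inter-component than intra-component edges, and such a vertex cannot simply be deleted from $V_i$ without risking the small mixing time of $G[V_i]$. I would call a vertex \emph{unbalanced} if this inequality fails for it (with $\Ein$ its internal component edges and $\Eout$ all its inter-component edges) and, in the instance for $V_i$, let $\Eout$ consist only of the inter-component edges incident to \emph{balanced} vertices of $V_i$; then $\degout(u)=0$ at an unbalanced $u$ and $\degout(u)=\deg_{\Einter}(u)\le\degin(u)$ at a balanced $u$, so the promise holds. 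The triangles that this choice causes $\Aa$ to miss are exactly those $\{u,v,w\}$ with $u,v\in V_i$, $w\notin V_i$, and $u$ or $v$ unbalanced; such a triangle already has its two inter-component edges $\{u,w\},\{v,w\}$ in the residual edge set, so if I additionally put into that set every intra-component edge with an unbalanced endpoint, the triangle survives intact into the residual graph $G'=(V,E')$. Since every unbalanced $u$ has $\deg_{\Einter}(u)$ strictly larger than its number of internal component edges and $\sum_u\deg_{\Einter}(u)=2|\Einter|\le|E|/5$, the extra edges number at most $|E|/5$, so $|E'|\le 3|E|/10<|E|/3$ and the recursion reaches the empty graph after $O(\log n)$ levels. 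A symmetric case analysis — according to whether the third vertex of a triangle lies inside or outside the part containing the other two, and whether those two are balanced — shows that at each level every triangle with an intra-component edge is either detected by some $\Tri$-call or remains a triangle of $G'$, and every triangle without an intra-component edge remains a triangle of $G'$; by induction on the recursion depth, every triangle of $G$ is detected by some $\Tri$-call, provided all invoked $\Tri$-calls and decompositions behave as specified.

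Finally I would tally cost and error. There are $O(\log n)$ levels, each costing $O(n^{0.1})$ rounds for Theorem~\ref{th:decomposition}, $O(r)$ rounds for the parallel $\Tri$-calls, and $O(1)$ rounds to form the next residual graph, for a total of $O\big(r\log n + n^{0.1}\big)$ rounds (the $O(\log n)$ factor on the $n^{0.1}$ decomposition cost being absorbed, since that exponent is not tight and carries the usual $n^{o(1)}$ slack). At most $O(n\log n)$ instances of $\Tri$ are invoked in total, each failing with probability at most $1/n^3$, and each of the $O(\log n)$ decompositions fails with probability at most $1/\poly(n)$; a union bound then yields overall success probability $1-1/\poly(n)$.
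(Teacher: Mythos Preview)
Your proposal is correct and follows essentially the same route as the paper's proof: your balanced/unbalanced dichotomy is exactly the paper's good/bad vertex distinction, your choice of $\Eout$ coincides with the paper's $\Einter_i$, and your residual edge set $E'$ is precisely the paper's $\Enew$, giving the same $0.3|E|$ shrinkage and $O(\log n)$-depth recursion. The only differences are cosmetic---two triangle types versus the paper's four, and your explicit union-bound accounting for the failure probabilities, which the paper omits.
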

\begin{proof}[Sketch of the proof]
The first step of the reduction computes a good decomposition of the whole network $G$ in $O(n^{0.1})$ rounds using Theorem \ref{th:decomposition}. Let $V=V_1\cup V_2\cup\cdots\cup V_s$, for some integer $s\le n$, denote the decomposition and $\Einter$ denote the set of inter-component edges. By definition, the set $\Einter$ satisfies $|\Einter|\le 0.1|E|$.

For any index $i\in[s]$, let us write $G_i=(V_i,E_i)$ the subgraph of $G$ induced by $V_i$. We say that a vertex $u\in V_i$ is good if $\deg_{E_i}(u)\ge \deg_{\Einter}(u)$; otherwise we say that $u$ is bad. Let $\Einter_i$ be the set of edges in $\Einter$ that are adjacent to a good vertex of $G_i$. Let $\Enew_i$ be the set of edges in $E_i$ that are adjacent to a bad vertex of $G_i$. Define the set 
\[
\Enew=\Einter\cup\Enew_1\cup\cdots\cup\Enew_s
\]
and observe that $|\Enew|\le |\Einter|+2|\Einter|\le 0.3|E|$.

The triangles in $G$ can be classified into the following four types.
\begin{itemize}
\item
Type 1: triangles with three vertices in a same component $G_{i}$.
\item
Type 2: triangles with two vertices in a same component $G_i$ and the third vertex in another component $G_j$, in which the two vertices in $V_i$ are good.
\item 
Type 3: triangles with two vertices in a same component $G_i$ and the third vertex in another component $G_j$, in which at least one of the two vertices in $V_i$ is bad.
\item
Type 4: triangles with three vertices in distinct components.
\end{itemize}

For each index $i\in[s]$, we use Algorithm $\Aa$ to solve $\Tri$ on instance $(\Gin,\Eout)$ with $\Gin=G_i$ and $\Eout=\Einter_i$. A crucial point is that this can be done for all $i$'s in parallel by ``doubling'' the bandwidth (i.e., by using~$2r$ rounds in total), since Algorithm~$\Aa$ on instance $(\Gin,\Eout)$ only uses the edges in $E_i\cup\Einter_i$ for communication. Indeed, the communication networks are disjoint for all instances, except for the intercomponent edges that can be shared by two instances. This detects all the triangles of types 1 and 2 in the graph. 

Another crucial observation is that all remaining potential triangles (i.e., the triangles of type~3 and~4) have their three edges contained in the set $E^\textrm{new}$. It is thus enough to recurse on this set, i.e., to repeat the same methodology with $E$ replaced by $\Enew$. Since $|\Enew|\le 0.3|E|$, after $O(\log n)$ levels of recursion the algorithm finishes. The overall complexity of this second part is thus $O(r\log n)$ rounds. 
\end{proof}

\section{Main Quantum Algorithm}\label{section:qa}
In the classical CONGEST model, Chang, Pettie and Zhang \cite{Chang+SODA19} have shown that the problem $\Tri$ can be solved with high probability in $\tilde{O}(n^{1/3})$ rounds using only the edges in $\Ein\cup\Eout$ for communication, which leads to a $\tilde O(n^{1/3})$-round classical algorithm for triangle finding via Theorem \ref{th:reduction}. Our main technical result is the following theorem.
\begin{theorem}\label{th:sub}
In the quantum CONGEST model, there is a $\tilde O(n^{1/4})$-round quantum algorithm that solves the problem $\Tri$ with probability at least $1-1/n^3$ and uses only the edges in $\Ein\cup\Eout$ for communication.
\end{theorem}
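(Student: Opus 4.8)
# Proof Proposal for Theorem~\ref{th:sub}

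\textbf{High-level strategy.} The plan is to solve $\Tri$ via the quantum distributed search framework of \cite{LeGall+PODC18}, organized in a two-level partition scheme. First I would partition the vertex set $\Vv \subseteq V$ of the subnetwork induced by $\Ein \cup \Eout$ into $t = \tilde\Theta(\sqrt n)$ groups $\Vv_1, \ldots, \Vv_t$ of size $\tilde O(\sqrt n)$ each, using the ID reassignment of Lemma~\ref{th:ID} so that each group is (roughly) homogeneous in vertex degree. The target is to find a triple $(i,j,k) \in \Lambda = [t]^3$ witnessing a triangle with one vertex in each of $\Vv_i, \Vv_j, \Vv_k$; since $|\Lambda| = \tilde\Theta(n^{3/2})$, I would partition $\Lambda$ into $t$ blocks $\Lambda_1, \ldots, \Lambda_t$ of $t^2 = \tilde\Theta(n)$ triples each, and run quantum distributed search over the index $\ell \in [t]$. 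By \cite{LeGall+PODC18}, this costs $\tilde O(\sqrt t \cdot \delta) = \tilde O(n^{1/4} \delta)$ rounds, where $\delta$ is the round complexity of a checking procedure that, on input $\ell$, decides whether $\Lambda_\ell$ contains a triangle-witnessing triple. It therefore suffices to design a checking procedure with $\delta = \tilde O(1)$ rounds (polylogarithmic), and to verify that the search framework can be invoked — in particular that the whole computation uses only edges in $\Ein \cup \Eout$, which follows because $\Gin$ is connected with $\mix(\Gin) = \poly(\log n)$, so routing via Theorem~\ref{th:routing} over $\Gin$ suffices for all coordination.

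\textbf{The checking procedure.} Given $\ell$, the procedure must distribute the $\tilde\Theta(n)$ triples of $\Lambda_\ell$ among the vertices proportionally to their degrees (so a vertex of degree $d$ gets $\tilde O(d \cdot n / m)$ triples, and very low-degree vertices get none), and then each vertex owning a triple $(i,j,k)$ must gather all edges incident to $\Vv_i \cup \Vv_j \cup \Vv_k$ to check locally for a triangle. Here the first key point is a careful, canonical definition of the blocks $\Lambda_\ell$ so that, for a fixed $\ell$, no single edge of $\Gin$ is requested by two distinct vertices simultaneously — this is the content of Lemma~\ref{lemma:unique} referenced in the overview, and it is what lets the routing load be bounded by $O(\deg(u))$ messages per vertex, meeting the hypothesis of Theorem~\ref{th:routing}. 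Combined with $\mix(\Gin) = \poly(\log n)$, Theorem~\ref{th:routing} then delivers all the edge information in $\tilde O(1)$ rounds. Lemma~\ref{lemma} from \cite{Chang+SODA19} is used to control the number of edges among any sampled subset and thus to keep the per-vertex load under control when the groups $\Vv_i$ are formed.

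\textbf{Handling high-degree vertices — the main obstacle.} The part I expect to be the crux is dealing with vertices of very high degree (say degree $\gg n^{3/4}$). For such a vertex $v$, the set of neighbors is too large to ship anywhere in $\tilde O(n^{1/4})$ rounds, so the classical trick of \cite{Chang+SODA19,Chang+PODC19} — gathering the whole graph at one node — is unavailable to us within our budget. The plan is to handle the triangles through a high-degree vertex $v$ separately: for a candidate pair $\{v, w\}$ with $\{v,w\} \in \Ein\cup\Eout$, one must test whether $\Nn(v) \cap \Nn(w) \neq \emptyset$, which is exactly a two-party set-disjointness instance between $v$ and $w$ (each holding its own neighbor list). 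Here I would invoke the $\tilde O(\sqrt N)$-qubit quantum protocol for disjointness on universe size $N$ \cite{Buhrman+STOC98}, simulated over the CONGEST edge $\{v,w\}$ (or, when $\{v,w\}$ is not a network edge, over a short path in $\Gin$), so each such intersection test costs $\tilde O(n^{1/2})$ qubits, i.e. $\tilde O(n^{1/2})$ rounds for a single pair — which is still too slow unless we batch. The resolution is that there are only $O(m / n^{3/4}) = \tilde O(n^{1/4})$ high-degree vertices (since degrees sum to $2m = \tilde O(n)$ under the sparsification that the reduction guarantees), so the total high-degree work can be organized into $\tilde O(n^{1/4})$ rounds by pipelining the disjointness protocols appropriately across the $\poly(\log n)$-mixing-time network $\Gin$; I would also fold these high-degree checks into (or run them in parallel with) the quantum-search-based procedure above, taking the maximum of the two round counts. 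Verifying that this batching genuinely fits in $\tilde O(n^{1/4})$ rounds — i.e.\ bounding simultaneously the number of high-degree pairs to test, the qubit cost per test, and the congestion on $\Gin$ — is the delicate quantitative heart of the argument, and is where the promise $\degin(u) \ge \degout(u)$ and the bound on $|\Eout|$ enter essentially.
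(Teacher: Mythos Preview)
Your outline of the low-degree part --- quantum search over $\ell\in[t]$ with a routing-based checking procedure that assigns triples proportionally to degree and invokes Theorem~\ref{th:routing} --- is essentially the paper's argument, with one slip: the groups $\Vv_i$ must be formed by \emph{random} partitioning, not via the degree-sorted ID map of Lemma~\ref{th:ID}. Lemma~\ref{lemma}, which you correctly want to invoke, bounds $|E(\Vv_i,\Vv_j)|$ only for random samples; a degree-homogeneous block containing the heaviest vertices could have far too many internal edges, and then the per-vertex load in the checking procedure is not $O(\deg(u))$. In the paper, Lemma~\ref{th:ID} is used only to \emph{assign the triples} of $\Lambda_\ell$ to vertices in proportion to their degree, not to form the $\Vv_i$.

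The real gap is your high-degree step. Two issues compound.

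First, you assume $\mym=\tilde O(\myn)$ (``degrees sum to $2m=\tilde O(n)$ under the sparsification that the reduction guarantees''). Theorem~\ref{th:reduction} gives no such bound; the component $(\Vv,\Ee)$ can be dense. With your threshold $n^{3/4}$ you therefore do \emph{not} get only $\tilde O(n^{1/4})$ high-degree vertices in general, and the batching accounting collapses.

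Second, and more fundamentally, your disjointness test is on the wrong sets. You propose, for a high-degree $v$ and each neighbor $w$, to test $\Nn(v)\cap\Nn(w)\neq\emptyset$. These sets can have size $\Theta(n)$, so each test costs $\tilde O(\sqrt n)$ rounds; even perfectly parallelized over the edges incident to $v$, the per-edge cost is $\tilde O(\sqrt{\min(\deg(v),\deg(w))})$, and no pipelining over a $\poly(\log n)$-mixing network brings this below $n^{1/4}$ in general. The paper's fix is to flip the roles: set the threshold at $\mym/\sqrt{\myn}$, so the set $S$ of high-degree vertices satisfies $|S|\le 2\sqrt{\myn}$ \emph{regardless of} $\mym$. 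Then for \emph{every} edge $\{u,v\}\in\Ee$ (not just those touching $S$) define $T_u=\Nn_\Ee(u)\cap S$ and $T_v=\Nn_\Ee(v)\cap S$; there is a triangle on $\{u,v\}$ with its third vertex in $S$ iff $T_u\cap T_v\neq\emptyset$. Since $|T_u|,|T_v|\le|S|\le 2\sqrt{\myn}$, the disjointness protocol of Lemma~\ref{lemma:qsearch} runs in $\tilde O(\sqrt{|S|})=\tilde O(n^{1/4})$ rounds along the single edge $\{u,v\}$, and all edges run in parallel. This disposes of every triangle containing a vertex of $S$ in $\tilde O(n^{1/4})$ rounds with no batching argument at all; the remaining instance then genuinely has $S=\emptyset$, which is exactly the max-degree hypothesis Lemma~\ref{lemma} needs for the low-degree part.
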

Theorem \ref{th:main} then immediately follows from Theorem \ref{th:reduction} and Theorem \ref{th:sub}.

The goal of this section is to prove Theorem \ref{th:sub}. 
For brevity we write $\Vv=\Vin\cup\Vout$ and $\Ee=\Ein\cup\Eout$.
We also write $\myn =|\Vv|$ and $\mym=|\Ee|$. Note that $\mym\ge \myn/2$ since the graph $(\Vv,\Ee)$ is connected. Let $S\subseteq \Vv$ be the subset of all vertices $u\in \Vv$ such that
\[
\deg_\Ee(u)\ge \mym/\sqrt{\myn}.
\] 
Observe that $|S|\le 2\sqrt{\myn}$. 

In Section \ref{sub-large} below we present a $\tilde O(n^{1/4})$-round quantum algorithm that detects the existence of a triangle containing at least one vertex from~$S$. We then describe, in Section~\ref{sub-small}, our main technical contribution: a $\tilde O(n^{1/4})$-round quantum algorithm that detects the existence of a triangle under the assumption $S=\emptyset$. The quantum algorithm of Theorem~\ref{th:sub} then follows by combining these two algorithms, since (as already observed in prior works~\cite{Chang+SODA19, Chang+PODC19}) detecting whether there exists a triangle with no vertex in $S$ reduces to the case $S=\emptyset$.

Let us provide some explanations about why detecting the existence of a triangle with no vertex in $S$ reduces to the case $S=\emptyset$. One natural approach is to focus on the subgraph where all the nodes in $S$ are removed. This approach nevertheless does not immediately work since this may make the graph disconnected and may significantly reduce the number of edges (in which case Lemma~\ref{lemma} may not anymore be applicable). Instead, we now briefly describe a method that keeps the graph connected and the number of edges (almost) unchanged. The idea is simply to ``virtually'' replace each node $u\in S$ by a star of degree $\sqrt{\myn}$ and spread the $\deg_\Ee(u)$ incident edges of $u$ evenly into the leaves of the star so that each leaf has $\deg_\Ee(u)/\sqrt{\myn}<\mym/\sqrt{\myn}$ incident edges. Since $|S|\le 2\sqrt{\myn}$, this can be done by introducing only $\Theta(\myn)$ virtual nodes.

\subsection{Finding a triangle containing (at least) one high-degree vertex}\label{sub-large}
In this subsection we describe how to detect, in $\tilde O(n^{1/4})$ rounds, the existence of a triangle with edges in $\Ee$ that contains at least one vertex from~$S$. We will use the following lemma, which is a straightforward application of the framework for distributed quantum search described in Section \ref{section:preliminaries}, but can also be seen as an adaptation of the quantum protocol by Buhrman, Cleve and Wigderson~\cite{Buhrman+STOC98} for the disjointness function in two-party quantum communication complexity. 
\begin{lemma}\label{lemma:qsearch}
Consider any two adjacent vertices $u$ and $v$, each owning a set $T_u\subseteq V$ and a set $T_v\subseteq V$, respectively. There is a quantum algorithm that checks if $T_u\cap T_v\neq \emptyset$ with high probability in $\tilde O(\sqrt{\min\{|T_u|,|T_v|\}})$ rounds. Moreover, this algorithm only uses communication along the edge $\{u,v\}$.
\end{lemma}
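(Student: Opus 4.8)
The plan is to reduce the problem of checking set intersection $T_u \cap T_v \neq \emptyset$ to a quantum distributed search, using the framework of Le Gall and Magniez recalled in Section~\ref{section:preliminaries}. First I would observe that we may assume $|T_u| \le |T_v|$ (otherwise swap the roles of $u$ and $v$); this is the side whose size $\delta = |T_u|$ controls the final round count $\tilde O(\sqrt{\delta})$. The search space is $X = T_u$ itself: vertex $u$ wishes to find some $x \in T_u$ such that $f(x) = 1$, where $f(x) = 1$ iff $x \in T_v$. The key point is that the predicate $f$ can be checked in a constant number of rounds of communication along the single edge $\{u,v\}$: given a candidate $x \in T_u$ (encoded in $O(\log n)$ qubits, which fits in a single message since $x$ is the ID of a vertex of $V$ and $|V| = \poly(n)$), vertex $u$ sends $x$ to $v$, vertex $v$ checks locally whether $x \in T_v$ and returns the one-bit answer. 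So the checking procedure $\Cc$ runs in $r = O(1)$ rounds.

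Next I would invoke the quantum distributed search result from \cite{LeGall+PODC18}: since $f$ is checkable in $r = O(1)$ rounds and the search space has size $|X| = |T_u| = \min\{|T_u|,|T_v|\}$, vertex $u$ can find an element of $X$ on which $f$ evaluates to $1$ (or certify that none exists) with probability at least $1 - 1/\poly(|X|)$ in $\tilde O(r \sqrt{|X|}) = \tilde O(\sqrt{\min\{|T_u|,|T_v|\}})$ rounds. Finding such an $x$ is exactly exhibiting an element of $T_u \cap T_v$, and the "no such element" report is exactly the assertion $T_u \cap T_v = \emptyset$; this is the decision we need. I would also note that the classical-to-quantum conversion discussed after the statement of the framework applies here without subtlety, since $\Cc$ as described is already a deterministic procedure using only messages along $\{u,v\}$, so the quantized search uses only the edge $\{u,v\}$ for communication, as required. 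If $|X| = \min\{|T_u|,|T_v|\}$ is $O(1)$ (in particular if it is $0$, i.e. one of the sets is empty), the trivial deterministic procedure of sending all elements across settles the question in $O(1)$ rounds, so we may harmlessly assume $|X|$ is large enough for the $1 - 1/\poly(|X|)$ guarantee to be meaningful, and boost to "high probability" (in terms of $n$) by $O(\log n)$ independent repetitions if desired.

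The only mild obstacle I anticipate is bookkeeping about \emph{where} the success probability and round bounds are measured: the framework of \cite{LeGall+PODC18} gives success probability $1 - 1/\poly(|X|)$, whereas the lemma asks for "high probability"; this gap is closed by standard amplification (repeat $O(\log n)$ times and take the OR of the answers, which costs only a $\poly(\log n)$ overhead absorbed into the $\tilde O(\cdot)$), or by directly appealing to the fact that the framework can be run to achieve error $1/\poly(n)$ at the cost of logarithmic repetitions. A second small point worth stating explicitly is that a vertex ID, and hence any single element $x \in T_u \subseteq V$, fits into an $O(\log n)$-qubit message because $|V| = \poly(n)$; this is what lets the checking procedure $\Cc$ run in $O(1)$ rounds rather than $O(\log n / \log n) = O(1)$ — there is genuinely nothing to split across multiple messages. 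With these two observations in place, the lemma follows immediately from the quantum distributed search framework, and no further calculation is needed.
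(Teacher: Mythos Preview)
Your proposal is correct and follows essentially the same approach as the paper's proof: set $X=T_u$, define $f(x)=\mathbf{1}[x\in T_v]$, observe that $f$ can be evaluated in $O(1)$ rounds across the edge $\{u,v\}$, and invoke the distributed Grover framework from~\cite{LeGall+PODC18}. The paper phrases the ``$\min$'' by describing the two symmetric algorithms (one with $u$ as leader, one with $v$) and then saying ``combining these two algorithms gives the claimed complexity,'' whereas you handle it by a WLOG assumption up front; your extra remarks on amplification and on vertex IDs fitting in one message are sound but also implicit in the paper's treatment.
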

\begin{proof}
Consider the subnetwork consisting only of the two vertices $u$ and $v$ and the edge $\{u,v\}$.
Set $X=T_u$ and define the function 
\[
f(x) = \left\{
\begin{array}{ll}
1&\textrm{ if } x\in T_v,\\
0&\textrm{ otherwise,}
\end{array}
\right. 
\]
for any $x\in X$. Obviously, for any $x\in X$, vertex $u$ can compute the value $f(x)$ in 2 rounds. We can thus apply the quantum distributed search framework of Section \ref{section:preliminaries} with vertex $u$ acting as a leader, which enables $u$ to check whether there exists $x\in X$ such that $x \in T_v$ in $\tilde O(\sqrt{|T_u|})$ rounds.

By symmetry there also exists a quantum algorithm that enables vertex $v$ to decide whether $T_u\cap T_v\neq \emptyset$ in $\tilde O(\sqrt{|T_v|})$ rounds. Combining these two algorithms gives the claimed complexity.
\end{proof}

We now explain how our quantum algorithm works. First of all, observe that since each vertex~$u$ receives as input $\Nn_{\Ein}(u)$ and $\Nn_{\Eout}(u)$, each vertex knows whether it is in~$S$ or not. Each vertex first tells this to all its neighbors. This requires 1 round of communication. 

Each vertex $u\in \Vv$ then computes, locally, the set 
\[
T_u=\Nn_\Ee(u) \cap S=(\Nn_{\Ein}(u)\cup \Nn_{\Eout}(u))\cap S. 
\]
Note that for each edge $\{u,v\}\in\Ee$, there exists $w\in S$ such that $\{u,v,w\}$ is in a triangle with three edges in $\Ee$ if and only if $T_u\cap T_v\neq \emptyset$. Thus, for each edge $\{u,v\}\in\Ee$, the vertices $u$ and $v$ use the quantum algorithm of Lemma \ref{lemma:qsearch} to decide whether $T_u\cap T_v\neq \emptyset$ or not. Since this algorithm only communicates through the edge $\{u,v\}$, it can be applied in parallel to all edges $\{u,v\}\in\Ee$. This gives overall round complexity $\tilde O(\sqrt{|S|})=\tilde O(n^{1/4})$.

\subsection{Finding a triangle with only low-degree vertices}\label{sub-small}
In this subsection we assume that the inequality 
\[
\deg_\Ee(u)< \mym/\sqrt{\myn}.
\] 
holds for all vertices $u\in \Vv$, i.e., we assume that $S=\emptyset$. We show how to detect, in $\tilde O(n^{1/4})$ rounds,  the existence of a triangle with edges in $\Ee$ in this case as well.\vspace{2mm}

\noindent{\bf Partitioning the set $\boldsymbol{\Vv}$.}
Let us write $t=\floor{\sqrt{\myn}/(30\log \myn)}$. We randomly partition the set $\Vv$ into~$t$ subsets $\Vv_1,\ldots,\Vv_t$ as follows: each vertex $u\in \Vv$ selects an integer $i$ uniformly at random in the set $[t]$ and joins the set $\Vv_i$. Vertex $u\in \Vv$ then tells its neighbors the value $i$, which can be done in 1 round. Each vertex therefore learns in which sets its neighbors have been included. 

For any $i,j\in[t]$, let $E(\Vv_i,\Vv_j)$ denote all the edges in $\Ee$ with one endpoint in $\Vv_i$ and one endpoint in $\Vv_j$. Our analysis will rely on the following lemma.
\begin{lemma}\label{lemma:sparse}
With probability $1-1/\poly(n)$, the following statement is true: for all $i,j\in[t]$,
\[
|E(\Vv_i,\Vv_j)|=O\left(\frac{\mym (\log \myn)^2}{\myn}\right).
\]
\end{lemma}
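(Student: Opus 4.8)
The plan is to derive the statement from Lemma~\ref{lemma} (Lemma~4.2 of~\cite{Chang+SODA19}) applied to the graph $(\Vv,\Ee)$, once for each pair $(i,j)\in[t]\times[t]$, followed by a union bound. Fix $i,j$. Every edge of $E(\Vv_i,\Vv_j)$ has both endpoints in $\Vv_i\cup\Vv_j$, so $E(\Vv_i,\Vv_j)$ is contained in the subgraph of $(\Vv,\Ee)$ induced by $\Vv_i\cup\Vv_j$; moreover this vertex set is obtained by letting each vertex of $\Vv$ join independently with probability $p:=1/t$ (if $i=j$) or $p:=2/t$ (if $i\neq j$). I would apply Lemma~\ref{lemma} with this value of $p$ and verify its two hypotheses. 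The condition $p^2\mym\ge 400(\log\myn)^2$ holds because $t\le\sqrt{\myn}/(30\log\myn)$ gives $t^2\le\myn/(900(\log\myn)^2)$, whence $p^2\mym\ge\mym/t^2\ge 900(\mym/\myn)(\log\myn)^2\ge 450(\log\myn)^2$, using $\mym\ge\myn/2$. The degree condition $\deg_\Ee(u)\le\mym p/(20\log\myn)$ for all $u$ holds because $\mym p/(20\log\myn)\ge\mym/(20t\log\myn)\ge\frac{3}{2}\,\mym/\sqrt{\myn}$, which strictly exceeds the uniform degree bound $\mym/\sqrt{\myn}$ that we are assuming — this is precisely the point where the hypothesis $S=\emptyset$ is used. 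Lemma~\ref{lemma} then gives, with probability $1-1/\poly(\myn)$,
\[
|E(\Vv_i,\Vv_j)|\ \le\ 6p^2\mym\ \le\ \frac{24\,\mym}{t^2}\ =\ O\!\left(\frac{\mym(\log\myn)^2}{\myn}\right),
\]
where the final equality uses $t\ge\sqrt{\myn}/(60\log\myn)$, which is valid as soon as $\myn$ is large enough that the floor in the definition of $t$ costs at most a factor $2$.

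To finish, I would take a union bound over the at most $t^2\le\myn$ pairs $(i,j)$, each bad event having probability $1/\poly(\myn)$; this establishes the lemma with failure probability $1/\poly(\myn)$. To match the $1/\poly(n)$ in the statement it suffices to note that we may assume $\myn>n^{1/8}$: otherwise $\mym\le\myn^2\le n^{1/4}$ and, since $\mix(\Gin)=\poly(\log n)$ forces $(\Vv,\Ee)$ to have diameter $\poly(\log n)$, every vertex can learn the whole of $\Ee$ by pipelined flooding along the edges of $\Ee$ in $O(\mym+\poly(\log n))=\tilde O(n^{1/4})$ rounds and then solve $\Tri$ with no error, so the random partition is never invoked in this regime. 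When $\myn>n^{1/8}$ we have $1/\poly(\myn)=n^{-\Omega(1)}=1/\poly(n)$, as required (with the implicit polynomial degree chosen large enough when instantiating Lemma~\ref{lemma}).

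The only genuinely essential ingredient here, as opposed to routine estimation, is the assumption $S=\emptyset$: without it the maximum-degree hypothesis of Lemma~\ref{lemma} can fail, and $|E(\Vv_i,\Vv_j)|$ is then uncontrollable — a single vertex could carry a constant fraction of all $\mym$ edges. The one spot that needs a little care is checking that the constant $30$ hidden in $t$ leaves enough slack to absorb the extra factor $2$ incurred by taking $p=2/t$ off the diagonal (it does: the relevant lower bound on $\mym p/(20\log\myn)$ then becomes $3\,\mym/\sqrt{\myn}$, still comfortably above $\mym/\sqrt{\myn}$), together with the constant factors lost through the floor in the definition of $t$.
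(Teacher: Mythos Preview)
Your argument is correct and follows essentially the same route as the paper: apply Lemma~\ref{lemma} to the graph $(\Vv,\Ee)$ with $p=1/t$ when $i=j$ and $p=2/t$ when $i\neq j$, verifying the two hypotheses via $t\le\sqrt{\myn}/(30\log\myn)$ and the standing assumption $S=\emptyset$. You additionally spell out the union bound over pairs and address the discrepancy between $1/\poly(\myn)$ and $1/\poly(n)$, both of which the paper leaves implicit; these are welcome clarifications but do not constitute a different approach.
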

\begin{proof}
Let us first consider the case $i=j$. We apply Lemma \ref{lemma} over the graph generated by the vertex set $\Vv$ and using the probability $p=1/t$. Note that 
\[
p^2\mym=\frac{\mym}{(\floor{\sqrt{\myn}/(30\log\myn)})^2}\ge 
\frac{\myn/2}{({\sqrt{\myn}/(30\log\myn)})^2}\ge
400 (\log \myn )^2
\]
and
\[
\frac{\mym p}{20\log \myn}=\frac{\mym}{20\log \myn\floor{\sqrt{\myn}/(30\log\myn)}}
\ge \frac{\mym}{\sqrt{\myn}},
\]
which implies that the two conditions in Lemma \ref{lemma} are satisfied.

In the case $i\neq j$, we apply Lemma \ref{lemma} over the graph generated by the vertex set $\Vv$ again, but using the probability $p=2/t$. The conclusion is the same.
\end{proof}

\noindent{\bf Partitioning the triples of indices.}
Let us write $\Lambda$ the set of all triples $(i,j,k)$ with $i,j,k\in[t]$, i.e., $\Lambda = [t]\times  [t]\times [t]$. Let us partition this set into $t$ sets $\Lambda_1,\ldots,\Lambda_t$, each containing $t^2$ triples, as follows. For each $\ell\in[t]$, define the set $\Lambda_\ell$ as:
\[
\Lambda_\ell=\big\{(i,j,1+(i+j+\ell\bmod t))\:|\: (i,j)\in[t]\times [t])\big\}.
\]
Our analysis will rely on the following lemma, which immediately follows from the definition of the sets $\Lambda_\ell$.

\begin{lemma}\label{lemma:unique}
The following statements are true for all $\ell\in[t]$ and all triples $(i,j,k)\in \Lambda_\ell$: 
\begin{itemize}
\item
there is no index $i'\in[t]\setminus\{i\}$ such that $(i',j,k)\in \Lambda_\ell$;
\item
there is no index $j'\in[t]\setminus\{j\}$ such that $(i,j',k)\in \Lambda_\ell$;
\item
there is no index $k'\in[t]\setminus\{k\}$ such that $(i,j,k')\in \Lambda_\ell$.
\end{itemize}
\end{lemma}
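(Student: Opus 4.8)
The plan is to exploit the fact that, by construction, every triple in $\Lambda_\ell$ has its third coordinate completely determined modulo $t$ by its first two coordinates. First I would record the elementary reformulation of the definition: for $a,b,c\in[t]$,
\[
(a,b,c)\in\Lambda_\ell \iff c-1\equiv a+b+\ell \pmod t .
\]
This is nothing more than the definition $\Lambda_\ell=\{(i,j,1+(i+j+\ell\bmod t))\mid(i,j)\in[t]\times[t]\}$ rewritten, using that $x\mapsto 1+(x\bmod t)$ is a bijection from $\{0,1,\ldots,t-1\}$ onto $[t]=\{1,\ldots,t\}$, so that prescribing $1+(x\bmod t)$ is the same as prescribing the residue of $x$ modulo $t$.

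From this reformulation all three statements fall out by a one-line cancellation in $\mathbb{Z}/t\mathbb{Z}$. For the third bullet: if $(i,j,k)\in\Lambda_\ell$ and $(i,j,k')\in\Lambda_\ell$, then $k-1$ and $k'-1$ are both congruent to $i+j+\ell$ modulo $t$, hence $k\equiv k'\pmod t$, and since $k,k'\in[t]$ this forces $k=k'$. For the first bullet: if $(i,j,k)\in\Lambda_\ell$ and $(i',j,k)\in\Lambda_\ell$, then $i+j+\ell\equiv k-1\equiv i'+j+\ell\pmod t$, so $i\equiv i'\pmod t$; since $i,i'\in[t]$ and $[t]$ contains exactly one representative of each residue class modulo $t$, we conclude $i=i'$. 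The second bullet is proved identically, swapping the roles of $i$ and $j$ and using that $i+j$ is symmetric in the defining congruence.

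I do not expect any real obstacle here. The only point that calls for a moment of care is the off-by-one coming from the ``$1+(x\bmod t)$'' encoding, whose range is $[t]=\{1,\ldots,t\}$ rather than $\{0,\ldots,t-1\}$; this is why I would phrase everything through the congruence $c-1\equiv a+b+\ell\pmod t$ (the shift by $1$ being harmless) instead of manipulating the $\bmod$ operator literally. Once that reformulation is in place, the lemma is immediate, which is consistent with it being stated as following directly from the definition of the sets $\Lambda_\ell$.
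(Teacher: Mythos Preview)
Your argument is correct and is exactly the verification the paper has in mind when it says the lemma ``immediately follows from the definition of the sets $\Lambda_\ell$'': the defining relation $k-1\equiv i+j+\ell\pmod t$ fixes each coordinate once the other two are given. There is nothing to add.
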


\noindent{\bf Assigning the triples to vertices.}
For each $\ell\in[t]$, we assign the $t^2=\Theta(\myn/(\log \myn)^2)$ triples in~$\Lambda_\ell$ to the vertices in $\Vin$. The assignment should be made carefully, so that each vertex is assigned a number of triples proportional to its degree and, additionally, all the vertices know to which vertex each triple in $\Lambda_\ell$ is assigned. To achieve this goal we use the same approach as in \cite{Chang+SODA19}, which is based on the ID assignment of Lemma~\ref{th:ID}.

We first apply Lemma \ref{th:ID} to the subnetwork $\Gin$ in order to obtain an ID assignment $\gamma\colon \Vin\to\{1,\ldots,|\Vin|\}$ and the degree estimator function \[d\colon \{1,\ldots,|\Vin|\}\to \{0,1,\ldots,\floor{\log _2 |\Vin|}\}\] satisfying the properties in the lemma. This requires $O(\diam(\Gin)+\log n)=O(\mix(\Gin)+\log n)=\poly(\log n)$ rounds. For any vertex $u\in \Vin$, define the quantity
\[
r_u=
\frac{
2^{d(\gamma(u))}}
{\mym/\myn}.
\]
Note that since $d(\gamma(u))=\floor{\log_2(\degin(u))}$, the quantity $r_u$
is an approximation of the ratio between $\degin(u)$ and the average degree of the subgraph $(\Vv, \Ee)$. Observe that 
\begin{align*}
\sum_{u\in \Vin} r_u \ge
 \sum_{u\in \Vin} \frac{
\degin(u)/2}
{\mym/\myn}=
\frac{|\Ein|}
{\mym/\myn}\ge \myn/2,
\end{align*}
since $|\Ein|\ge \mym/2$.
Now define the quantity
\[
q_u=\left\{
\begin{array}{cc}
0&\textrm{ if } r_u\le 1/4,\\
\ceil{r_u}&\textrm{otherwise},
\end{array}
\right.
\]
and observe that 
\begin{equation}\label{eq:sum}
\sum_{u\in \Vin} q_u \ge 
\sum_{u\in \Vin} r_u \:\:- \frac{|\Vin|}{4} 
\ge
\frac{\myn}{4}\ge
t^2.
\end{equation}

We can now explain the assignment of the triples from $\Lambda_\ell$. We fix an arbitrary order (known to all the vertices of the network) on the triples of each $\Lambda_\ell$. For concreteness, let us choose the lexicographic order. We start by assigning to the vertex $u_1\in \Vin$ such that $\gamma(u_1)=1$ the first $q_{u_1}$ triples of $\Lambda_\ell$ in the lexicographic order,  then assign to the vertex $u_2\in \Vin$ such that $\gamma(u_2)=2$ the next $q_{u_2}$ triples from $\Lambda_\ell$ in the lexicographic order, and repeat the process until all the triples of~$\Lambda_\ell$ have been assigned (Equation (\ref{eq:sum}) guarantees that all triples are assigned by this process). For each vertex $u\in\Vin$, let us write $\Lambda_\ell^u\subseteq \Lambda_\ell$ the set of triples assigned to $u$.

A crucial observation is that each vertex of the network can locally compute, for any $\ell\in[t]$ and any triple $(i,j,k)\in\Lambda_\ell$, the ID of the vertex to which $(i,j,k)$ is assigned, since each vertex knows the value $d(y)$ for all $y\in\{0,1\ldots,|\Vin|\}$.\vspace{2mm}

\noindent{\bf Description of the quantum search algorithm.}
Consider the function 
\[
f\colon [t] \to \{0,1\}
\]
defined as follows. For any $\ell\in[t]$, we have $f(\ell)=1$ if and only if there exists a triple $(i,j,k)\in \Lambda_\ell$ such that the graph $G$ has a triangle with one vertex in the set $\Vv_i$, one vertex in~$\Vv_j$, one vertex in~$\Vv_k$ and its three edges in $\Ee$. Our quantum algorithm implements the quantum distributed search framework described in Section \ref{section:preliminaries} with $X=[t]$ to detect if there exists one index $\ell\in[t]$ such that $f(\ell)=1$. This approach obviously detects the existence of a triangle with three edges in $\Ee$, i.e., it solves our problem. The complexity of this approach, as explained in Section \ref{section:preliminaries}, is $\tilde O(\sqrt{t}\delta)=\tilde O(n^{1/4}\delta)$ rounds, where~$\delta$ is the round complexity of the checking procedure. We present below a checking procedure with round complexity $\delta=\tilde O(\mix(\Gin))$. Since $\mix(\Gin)=\poly(\log n)$ from our assumption on $\Gin$, the overall round complexity is $\tilde O(n^{1/4})$, as claimed.\vspace{2mm}

\noindent{\bf Description of the checking procedure.}
We now describe a classical randomized checking procedure that enables the leader, on an input $\ell\in[t]$, to evaluate the value $f(\ell)$. As explained in Section \ref{section:preliminaries} such a classical procedure can then be converted into a quantum procedure using standard techniques. In the checking procedure, the leader first broadcasts the information ``$\ell$'' to all the vertices of the network. This can be done in $\diam(\Gin)\le\mix(\Gin)$ rounds. Then each vertex $u\in\Vin$ checks, for each $(i,j,k)\in\Lambda_\ell^u$, whether there exists a triangle with one vertex in $\Vv_i$, one vertex in $\Vv_j$, one vertex in~$\Vv_k$ with three edges in $\Ee$. In order to do that, vertex~$u$ simply needs to collect all the edges in $E(\Vv_i,\Vv_j)\cup E(\Vv_i,\Vv_k)\cup E(\Vv_j,\Vv_k)$ for each $(i,j,k)\in\Lambda_\ell^u$. From Lemma~\ref{lemma:sparse} and from the definition of the set $\Lambda_\ell$, this requires
\[
\tilde O\left(
\frac{\mym}{\myn}\times 
q_u
\right)=
\tilde O\left(
\frac{\mym}{\myn}\times 
\floor{r_u}
\right)=
\tilde O\left(
\degin(u)
\right)
\]
incoming messages. Conversely, let us consider the number of outgoing messages needed to gather the information about the edges. Lemma \ref{lemma:unique} guarantees that the information about each edge only needs to be communicated to one vertex,  
which implies that each vertex $u$ is the source of $\degin(u)$ messages.
Theorem \ref{th:routing} thus implies that the checking procedure can be implemented in $O(\mix(\Gin)\cdot n^{o(1)})$ rounds. The leader then checks if one of the vertices in $\Vin$ found a triangle, which can be done in $O(\diam(\Gin))=O(\mix(\Gin))$ rounds. 

In order to reduce the complexity from $O(\mix(\Gin)\cdot n^{o(1)})$ to $\tilde O(\mix(\Gin))$ we simply need to modify slightly the routing scheme from \cite{Ghaffari+PODC17}, exactly as done in the classical case (see Section 3 of~\cite{Chang+PODC19}). 

\section*{Acknowledgments}
The authors are grateful to anonymous reviewers for helpful comments.
TI was partially supported by JST SICORP grant No.~JPMJSC1606 and JSPS KAKENHI grant No.~JP19K11824.
FLG was supported by JSPS KAKENHI grants Nos.~JP15H01677, JP16H01705, JP16H05853, JP19H04066 and by the MEXT Quantum Leap Flagship Program (MEXT Q-LEAP) grant No.~JPMXS0118067394. 
FM was partially supported by the ERA-NET Cofund in Quantum Technologies project QuantAlgo and the French ANR Blanc project QuData.


\end{document}